\newtheorem{thm}{Theorem}[section]
\newtheorem{prop}[thm]{Proposition}
\theoremstyle{definition}
\theoremstyle{remark}
    \newcommand{\mbb}[1]{\mathbb{#1}}
\begin{document}

\title[Data-Driven viscoelasticity]{Model-Free Data-Driven viscoelasticity in the frequency domain}

\author
{
    Hossein Salahshoor${}^1$, Michael~Ortiz${}^{1,2}$
}

\address
{
    ${}^1$Division of Engineering and Applied Science,
    California Institute of Technology, 1200 E.~California Blvd., Pasadena, CA 91125.
    ${}^2$Hausdorff Center for Mathematics, Universit\"at Bonn, Endenicher Allee 60, 53115 Bonn, Germany \\ \newline\indent
}

\email{ortiz@caltech.edu}

\begin{abstract}
We develop a Data-Driven framework for the simulation of wave propagation in viscoelastic solids directly from dynamic testing material data, including data from Dynamic Mechanical Analysis (DMA), nano-indentation, Dynamic Shear Testing (DST) and Magnetic Resonance Elastography (MRE), without the need for regression or material modeling. The problem is formulated in the frequency domain and the method of solution seeks to minimize a distance between physically admissible histories of stress and strain, in the sense of compatibility and equilibrium, and the material data. We metrize the space of histories by means of the flat-norm of their Fourier transform, which allows consideration of infinite wave trains such as harmonic functions. Another significant advantage of the flat norm is that it allows the response of the system at one frequency to be inferred from data at nearby frequencies. We demonstrate and verify the approach by means of two test cases, a polymeric truss structure characterized by DMA data and a 3D soft gel sample characterized by MRE data. The examples demonstrate the ease of implementation of the Data-Driven scheme within conventional commercial codes and its robust convergence properties, both with respect to the solver and the data.
\end{abstract}

\maketitle


\section{Introduction}

A number of techniques are presently available for characterizing viscoelastic materials in the frequency domain, including Dynamic Mechanical Analysis (DMA) \cite{PAN21}, nanoindentation \cite{herbert2008nanoindentation, herbert2009measuring}, dynamic shear testing (DST) \cite{bayly2008magnetic, arbogast1998material} and magnetic resonance elastography (MRE) \cite{MLRGME95}, among others. The systematic application of these techniques results in large material data sets that challenge analytical representation and computational utilization. Two main strategies arise in response to this wealth of material data: i) Modeling the data using conventional rheological materials or, more recently, neural-network representations and machine learning regression \cite{Xu:2021}, then using the calibrated material models in the solution of boundary-value problems; ii) Using the data directly in the solution of suitably-formulated boundary-value problems in the spirit of Data-Driven (DD) mechanics \cite{KO16}, thereby bypassing the material-modeling step altogether. The main advantage of the DD paradigm is that is makes direct use of {\sl the data, all the data, and nothing but the data.} In particular, no loss of information is incurred with respect to the material data set, no modeling biases or errors are introduced and no {\sl ad hoc} choices need to be made regarding the functional form of the material model. In addition, if material behavior is properly sampled with increasing fidelity, the DD solution may be expected to converge to the solution of the underlying, and unknown, material response \cite{Conti:2018}.

The application of the DD paradigm to history-dependent materials has been investigated in \cite{Eggers:2019}. The history dependence of the material response requires characterizing entire state histories $(\epsilon(x,t), \sigma(x,t))$ of stress and strain at material points $x$ in the solid. The collection of all such state histories defines a material-history data set $\mathcal{D}$ that characterizes the material response of the solid. In addition, the state histories are subject to the constraints of compatibility and equilibrium, both in the interior and at the boundary. The collection of all histories satisfying those physical constraints defines a constraint set $\mathcal{E}$ of admissible state histories. Evidently, the actual histories undergone by the solid are those that lie at the intersection of $\mathcal{D}$ and $\mathcal{E}$, i.~e., histories that are physically admissible, in the sense of compatibility and equilibrium, and are contained in the material-history data set characterizing material behavior.

In this work, we consider the case, often encountered in practice, in which the full material behavior of a viscoelastic material is not known, but only partially characterized by empirical material-history data sets $\mathcal{D}_h$. For instance, such data sets could be the result of DMA \cite{PAN21}, in which case the sampled histories $(\epsilon(x,t), \sigma(x,t))$ are harmonic in time, or could be the result of creep and relaxation tests, or computed from micromechanics \cite{Heyden:2016, Manav:2021}, or determined otherwise. In general, the intersection of the empirical material-history data set $\mathcal{D}_h$ and the constraint set $\mathcal{E}$ is likely to be empty, i.~e., no material history recorded in $\mathcal{D}_h$ is likely to be compatible and in equilibrium with the applied loading history. In order to circumvent this difficulty, we relax the notion of solution and seek to determine admissible state histories in $E$ that are closest, in the sense of a suitable metric, to the empirical material-history data sets $\mathcal{D}_h$. In this manner, we expect to identify approximating state histories $(\epsilon_h(x,t),\sigma_h(x,t))$ that converge to the exact state history $(\epsilon(x,t),\sigma(x,t))$ as the sequence $\mathcal{D}_h$ of sets converges to $\mathcal{D}$ in some appropriate manner.

Here, we focus on questions of practical implementation of DD linear viscoelasticity and demonstrate convergence with respect to material data by means of selected numerical tests. A first question of crucial importance concerns the choice of metric used to select approximate state histories. We specifically consider empirical material-history data sets $\mathcal{D}_h$ generated by frequency-domain characterization techniques such as DMA and MRE. In particular, the set $\mathcal{D}_h$ contains uniform harmonic state histories $(\epsilon(t),\sigma(t))$ defined for all times. Evidently, integral norms fail to be defined for such histories and cannot be taken as a basis for DD analysis. Instead, we note that the Fourier transforms $(\hat{\epsilon}(\omega), \hat{\sigma}(\omega))$ of the sampled state histories are Dirac-deltas concentrated at the applied frequency $\omega$, with complex amplitudes related by a complex modulus $\mathbb{E}$. A natural empirical representation of such behavior is by means of material data sets $\mathcal{D}_h$ consisting of pairs $(\omega_i,\mathbb{E}_i)$ of applied frequencies $\omega_i$ and corresponding complex moduli $\mathbb{E}_i$. In addition, a canonical metric for computing the distance between two such data points is supplied by the {\sl flat norm} \cite{MO04, CV08}, which measures the discrepancy in both amplitude and frequency in the Fourier representation. A crucial property of the flat norm is that it allows the response of the material at one frequency to be compared to data at nearby frequencies. This property is particularly useful when dealing with sparse material data sets $\mathcal{D}_h$ containing data at a finite, possibly small, number of applied frequencies.

The article is organized as follows. Following an introductory section aimed at defining the problem and establishing notation, Section~\ref{06F2wm}, an implementation of the frequency-domain DD linear-viscoelastic formalism is presented in Section~\ref{eM62eR}. In Section~\ref{p1PDI0}, we illustrate the DD approach by means of two examples of application: i) A truss structure made of a polymer characterized by DMA data, representative of polymer-based additive manufacturing technology \cite{Arefin:2021}; and ii) a three-dimensional viscoelastic gel characterized by MRE data, commonly used as phantom in the field of ultrasonic biomechanics \cite{Rabut:2020}. In both cases, we demonstrate the convergence of the DD approach with respect to the data. We conclude with a summary and outlook in Section \ref{Br1zak}.

\section{Dynamic viscoelasticity}\label{06F2wm}

For definiteness, we restrict attention to finite-dimensional discrete models. The models under consideration comprise $m$ material points, or structural members in the setting of structural mechanics, whose state at time $t$ is characterized by strains $\epsilon(t) \equiv \{\epsilon_e(t) \in \mathbb{R}^{d},\ e=1,\dots,m\}$ and stresses $\sigma(t) \equiv \{\sigma_e(t) \in \mathbb{R}^{d},\ e=1,\dots,m\}$, where $d$ is the stress and strain dimension.

\subsection{Governing equations}

The state of the system is subject to compatibility and dynamic equilibrium constraints of the form
\begin{subequations}\label{9qHWzU}
\begin{align}
    &
    \epsilon_e(t) = B_e u(t) + g_e(t), \quad e = 1,\dots m ,
    \\ &
    M \ddot{u}(t) + \sum_{e=1}^m w_e B_e^T \sigma_e(t) = f(t) ,
\end{align}
\end{subequations}
where $u(t) \in \mathbb{R}^{n}$ is the array of unconstrained nodal displacements at time $t$, with $n$ the spatial dimension, $f(t) \in \mathbb{R}^n$ is a nodal force array at time $t$, $g_e(t) \in \mathbb{R}^{d}$ is an initial strain at time $t$ resulting from prescribed displacement boundary conditions, $w_e$ are positive weights, $B_e \in L(\mathbb{R}^n, \mathbb{R}^{d})$ is the strain-displacement matrix for material point $e$ and $M \in \mathbb{R}^{n \times n}$ is a mass matrix. The constraints (\ref{9qHWzU}) can also be expressed in compact form as
\begin{subequations}\label{vZftZT}
\begin{align}
    &
    \epsilon(t) = B u(t) + g(t) ,
    \\ &
    M \ddot{u}(t) + B^T W \sigma(t) = f(t)
\end{align}
\end{subequations}
with $B = (B_1, \dots, B_m) \in \mathbb{R}^{N \times n}$, $N = m d$, $\epsilon(t) = (\epsilon_1(t), \dots, \epsilon_m(t))  \in \mathbb{R}^N$, $g(t) = (g_1(t), \dots, g_m(t))  \in \mathbb{R}^N$, $\sigma(t) = (\sigma_1(t),\dots,$ $\sigma_m(t))$  $\in$ $\mathbb{R}^N$ and $W \in \mathbb{R}^{N\times N}$ is a diagonal matrix of weights such that $W \sigma = \{w_1 \sigma_1,\dots, w_m \sigma_m\}$.

Classically, problem (\ref{9qHWzU}) is closed by assuming a hereditary dependence
of the local stress history $\sigma_e(t)$ on the local strain history $\epsilon_e(t)$ at every material point $e$. The mapping $F_e$ maps local histories of strain to local histories of stress and must be {\sl causal}, i.~e., $\sigma_e'(t) = \sigma_e''(t)$ if $\sigma_e'=F_e(\epsilon_e')$, $\sigma_e''=F_e(\epsilon_e'')$, and $\epsilon_e'(s) = \epsilon_e''(s)$ for all $s \leq t$.

\subsection{Fourier representation}

Dynamic viscoelasticity techniques, such as Dynamic Mechanical Analysis (DMA) \cite{PAN21}, are common means of characterizing viscoelastic behavior as a function of frequency and temperature. DMA consists of applying a harmonic force to a test specimen and measuring the resulting strain response. Evidently, DMA contemplates histories of strain and stress that extend over infinite time. In order to account for such histories, we adopt a Fourier representation. Formally, an application of the Fourier transform to (\ref{vZftZT}) gives
\begin{subequations}\label{2viLPa}
\begin{align}
    & \label{tQ8qYF}
    \hat{\epsilon}(\omega) = B \hat{u}(\omega) + \hat{g}(\omega) ,
    \\ &
    B^T W \hat{\sigma}(\omega) - M \omega^2 \hat{u}(\omega) = \hat{f}(\omega) ,
\end{align}
\end{subequations}
where $\hat{}$ denotes the Fourier transform of a function and $\omega$ is the frequency. We note that the Fourier transforms of infinite wave trains can be general measures. For instance, the Fourier transform of a harmonic time history is a Dirac measure centered at the harmonic frequency. Therefore, in general the relations (\ref{2viLPa}) must be interpreted in the sense of measures.

\subsection{Linear viscoelasticity}

At every material point $e$, linear viscoelasticity is characterized as a linear {\sl hereditary law} of the form
\begin{equation}\label{j1eDro}
    \sigma_e(t)
    =
    \int_{-\infty}^t
        \mathbb{J}_e(t-s) \, \dot{\epsilon}_e(s)
    \, ds
    =
    (\mathbb{J}_e * \dot{\epsilon}_e)(t)    ,
\end{equation}
where $\mathbb{J}_e : [0,+\infty) \to L(\mathbb{R}^{{d}\times{d}}_{\rm sym}, \mathbb{R}^{{d}\times{d}}_{\rm sym})$, $\mathbb{J}_e^T(t) = \mathbb{J}_e(t)$, $\mathbb{J}_e(t)$ to $0$ for $t < 0$, is the relaxation function and $*$ is the convolution operator. We may regard (\ref{j1eDro}) as a mapping that gives the stress $\sigma(t)$ and time $t$ as a function of the past history of strain. Evidently, (\ref{j1eDro}) satisfies causality by construction.

Taking the Fourier transform of (\ref{j1eDro}) and applying the convolution theorem, we obtain
\begin{equation}\label{22PiAI}
    \hat{\sigma}_e(\omega)
    =
    i \omega
    \hat{\mathbb{J}}_e(\omega) \hat{\epsilon}_e(\omega)
    \equiv
    \mathbb{E}(\omega) \hat{\epsilon}_e(\omega) ,
\end{equation}
where $\mathbb{E}(\omega) = \mathbb{E}'(\omega) + i \mathbb{E}''(\omega)$ is the complex modulus and
\begin{subequations} \label{CM8rWj}
\begin{align}
    &
    \mathbb{E}'(\omega)
    =
    \omega
    \int_0^\infty \mathbb{J}(s) \sin(\omega s) \, ds ,
    \\ &
    \mathbb{E}''(\omega)
    =
    \omega
    \int_0^\infty \mathbb{J}(s) \cos(\omega s) \, ds ,
\end{align}
\end{subequations}
are the storage and loss moduli, respectively. We note that the relation (\ref{CM8rWj}) separates by frequency in the linear case. These moduli satisfy the
{\sl Kramers-Kronig relations}
\begin{equation}
    \mathbb{E}'(\omega)
    =
    \frac{2}{\pi}
    \mathcal{P} \int_0^\infty
    \frac{\bar{\omega} \mathbb{E}''(\bar{\omega})}{\bar{\omega}^2 - \omega^2}
    \, d\bar{\omega},
    \qquad
    \mathbb{E}''(\omega)
    =
    -
    \frac{2\omega}{\pi}
    \mathcal{P} \int_0^\infty
    \frac{\mathbb{E}'(\bar{\omega})}{\bar{\omega}^2 - \omega^2}
    \, d\bar{\omega},
\end{equation}
which are necessary and sufficient for causality.

Inserting the viscoelastic relation (\ref{22PiAI}) into (\ref{2viLPa}) and eliminating $\hat{\epsilon}(\omega)$ and $\hat{\sigma}(\omega)$ gives the displacement problem
\begin{equation}\label{B9Jhtb}
     B^T W \mathbb{E}(\omega) (B \hat{u}(\omega) + \hat{g}(\omega))
     -
     M \omega^2 \hat{u}(\omega)
     =
     \hat{f}(\omega) ,
\end{equation}
to be satisfied for every frequency $\omega$. Once $\hat{u}(\omega)$ is known, $\hat{\epsilon}(\omega)$ and $\hat{\sigma}(\omega)$ follow from (\ref{tQ8qYF}) and (\ref{22PiAI}).

\subsection{Steady state under monochromatic transduction}

As a particular example of application, suppose that the forcing is harmonic,
\begin{equation}
    \hat{f} = 2 \pi F \delta_{\Omega},
    \quad
    \hat{g} = 2 \pi G \delta_{\Omega},
\end{equation}
where $F\in\mathbb{C}^n$ and $G\in\mathbb{C}^N$ are constant complex amplitudes and $\Omega$ is the transduction frequency. Harmonic forcing of this type is induced, e.~g., as a result of monochromatic transduction. Then, we may solve (\ref{2viLPa}) by setting
\begin{equation}
    \hat{u} = 2 \pi U \delta_{\Omega},
    \quad
    \hat{\epsilon} = 2 \pi E \delta_{\Omega},
    \quad
    \hat{\sigma} = 2 \pi S \delta_{\Omega},
\end{equation}
with $U\in\mathbb{C}^n$, $E\in\mathbb{C}^N$ and $S\in\mathbb{C}^N$ also constant complex amplitudes such that
\begin{subequations}\label{Js6jNy}
\begin{align}
    &\label{xDK1CK}
    E = B U + G ,
    \\ &
    B^T W S - M \Omega^2 U = F .
\end{align}
\end{subequations}
Suppose that the complex modulus $\mathbb{E}(\Omega)$ of the material is known exactly. Then,
\begin{equation}\label{z2VFiv}
    S = \mathbb{E}(\Omega) \, E ,
\end{equation}
and (\ref{Js6jNy}) reduces to
\begin{equation}\label{5Vx1Hp}
    B^T W \mathbb{E}(\Omega) (B U + G)- M \Omega^2 U = F ,
\end{equation}
which is a monochromatic form of (\ref{B9Jhtb}) and can be solved directly for the steady-state displacement amplitude.

\section{Data-Driven reformulation}\label{eM62eR}

An essential difficulty with the classical approach just described is that, whereas the field equations (\ref{2viLPa}) are known exactly and are uncertainty-free, the material law (\ref{22PiAI}) is known only empirically through experimental data. We wish to reformulate the problem in such a way that predictions can be made directly from the field equations and the available empirical data. We additionally wish such Data-Driven (DD) predictions to be approximating in the sense of converging to the classical solution when the empirical data converges, in some appropriate sense to be made precise, to an underlying---and unknown---material law of the form (\ref{22PiAI}).

\subsection{The data-driven problem with fully-known material behavior}

Let $\mathcal{Z}$ be the phase space of all possible histories $(\epsilon(t), \sigma(t))$. We may then regard the local constitutive relation (\ref{j1eDro}), or its Fourier representation (\ref{22PiAI}), as restricting the possible histories to a material set
\begin{equation}\label{kST5C3}
    \mathcal{D} = \{(\epsilon(t), \sigma(t)) \in \mathcal{Z} \, : \,
    \text{eq.}~(\ref{j1eDro}) \text{ or } \text{eq.}~(\ref{22PiAI}) \} .
\end{equation}
Likewise, we may regard (\ref{vZftZT}), or its Fourier representation (\ref{2viLPa}) as defining a constraint set of admissible histories
\begin{equation}
    \mathcal{E} = \{(\epsilon(t), \sigma(t)) \in \mathcal{Z} \, : \,
    \text{eq.}~(\ref{vZftZT}) \text{ or } \text{eq.}~(\ref{2viLPa}) \} .
\end{equation}
The aim then is to find all histories $(\epsilon(t), \sigma(t))$ in the intersection $\mathcal{D} \cap \mathcal{E}$, i.~e., all histories that are admissible and satisfy the constitutive relations of the material.

\subsection{Empirical data}

Suppose that the complex modulus $\mathbb{E}(\omega)$ of a linear viscoelastic material, and therefore its data set $\mathcal{D}$, eq.~(\ref{kST5C3}), is not known exactly but only approximately through empirical data. In the present work, we specifically consider local material data of the Dynamical Material Analysis (DMA) type. Thus, the behavior of every material point $e$ is characterized by a set $\mathcal{P} = \{(\omega_i,\mathbb{E}_i),\ i=1,\dots,N\}$, consisting of measured pairs of frequencies and corresponding complex moduli. Alternatively, we may regard the data as a collection of experimentally observed harmonic histories of stress and strain
\begin{equation}
    \epsilon_e(t) = E {\rm e}^{i\omega_i t},
    \quad
    \sigma_e(t) = S {\rm e}^{i\omega_i t},
    \quad
    S = \mathbb{E}_i \, E ,
    \quad
    i = 1,\dots, N .
\end{equation}
Collecting these histories, we obtain a local material data set of the form
\begin{equation}
\begin{split}
    &
    \mathcal{D}_{\rm loc}
    =
    \{
        (\epsilon_e(t),\sigma_e(t)) \in \mathcal{Z}_{\rm loc} \, : \,
        \\ & \qquad\qquad
        \hat{\epsilon}_e = \sum_{i=1}^{N} 2 \pi E \delta_{\omega_i},
        \ \hat{\sigma}_e = \sum_{i=1}^{N} 2 \pi S \delta_{\omega_i},
        \ S = \mathbb{E}_i \, E
    \} ,
\end{split}
\end{equation}
where $\mathcal{Z}_{\rm loc}$ denotes the phase space of local histories $(\epsilon_e(t),\sigma_e(t))$ of material point $e$. The corresponding global material data set is, then,
\begin{equation}
    \mathcal{D}
    =
    \mathcal{D}_{\rm loc}^m
    =
    \{
        (\epsilon(t),\sigma(t)) \in \mathcal{Z} \, : \,
        (\epsilon_e(t),\sigma_e(t)) \in \mathcal{D}_{\rm loc} ,
        \ e = 1,\dots,m
    \} .
\end{equation}

\subsection{The data-driven problem with empirical material data}

When the material behavior is characterized by means of an empirical data set $\mathcal{D}$ such as just described, the intersection $\mathcal{D} \cap \mathcal{E}$ is likely to be empty, i.~e., there is no history $(\epsilon(t),\sigma(t))$ that is both admissible and in the material data set, and the notion of solution needs to be extended. Following \cite{Eggers:2019}, we seek instead a pair of histories, an admissible history $(\epsilon'(t),\sigma'(t)) \in \mathcal{E}$ and a material history $(\epsilon''(t),\sigma''(t)) \in \mathcal{D}$, that are as close as possible to each other.

A suitable notion of distance between dynamical histories, including infinite wave trains whose Fourier transforms are general measures, is supplied by the flat norm \cite{MO04}. For present purposes, it suffices to characterize the flat-norm distance between harmonic histories. To this end, we begin by introducing a norm in the complex stress-strain space $\mathbb{C}^d\times\mathbb{C}^d$
\begin{equation}
    \| Z_e \|^2
    =
    C \| E_e \|^2 + C^{-1} \| S_e \|^2 ,
\end{equation}
where we write $Z_e = (E_e, S_e)$, with $E_e \in \mathbb{C}^d$ and $S_e \in \mathbb{C}^d$ for short and $C>0$ is a parameter introduced to even out dimensions.

Consider now two local harmonic histories $z_e'(t)$ and $z_e''(t)$ with Fourier representation
\begin{equation}
    \hat{z}'_e = 2 \pi Z'_e \delta_{\omega'} ,
    \qquad
    \hat{z}''_e = 2 \pi Z''_e \delta_{\omega''} ,
\end{equation}
where $Z'_e$, $Z''_e \in \mathbb{C}^d\times\mathbb{C}^d$ are constant complex amplitudes and $\delta_{\omega'}$ and $\delta_{\omega''}$ are Dirac measures centered at the harmonic frequencies $\omega'$ and $\omega''$, respectively. Then, we identity the distance between $z'(t)$ and $z''(t)$ with the flat-norm distance $\| \hat{z}'_e - \hat{z}''_e \|$ between $\hat{z}'_e$ and $\hat{z}''_e$, which is given explicitly by Prop.~\ref{iWC5ll} of Appendix~\ref{af7qTx} with $\lambda = 1/\omega_{\rm cutoff}$ and $\omega_{\rm cutoff}$ a cutoff frequency. As may be seen from Prop.~\ref{iWC5ll}, the flat-norm distance between two Dirac measures quantifies the difference between their amplitudes as well as the distance between their points of application, which indeed defines a natural distance between Dirac measures. In particular the convergence $\hat{z}'_e \to \hat{z}''_e$ requires their amplitudes and their points of application to converge, i.~e., $Z'_e \to Z''_e$ in $\mathbb{C}^d\times\mathbb{C}^d$ and $\omega' \to \omega''$ in $\mathbb{R}$, as expected.

We can further extend the flat norm distance to two global harmonic histories,
\begin{equation}
    \hat{z}' = 2 \pi Z' \delta_{\omega'} ,
    \qquad
    \hat{z}'' = 2 \pi Z'' \delta_{\omega''} ,
\end{equation}
with $Z'$, $Z'' \in \mathbb{C}^N\times\mathbb{C}^N$, in a natural way as
\begin{equation}
    \| z'(t) - z''(t) \|
    =
    \sum_{e=1}^m
    \| z'_e(t) - z''_e(t) \| .
\end{equation}
The corresponding DD problem is, then,
\begin{equation}\label{eE5sIt}
    \inf_{y(t)\in\mathcal{D}}
    \inf_{z(t)\in\mathcal{E}}
    \| y(t) - z(t) \| ,
\end{equation}
i.~e., we seek a material history $y(t)$ and an admissible history $z(t)$ that are closest to each other in the sense of the flat norm.

We may regard (\ref{eE5sIt}) as an adversarial game with players $y(t)$ and $z(t)$ with goals
\begin{subequations}\label{8x8Jfh}
\begin{align}
    &
    F(y(t),z(t)) = \| y(t) - z(t) \| + I_{\mathcal{D}}(y(t)) ,
    \\ &
    G(y(t),z(t)) = \| y(t) - z(t) \| + I_{\mathcal{E}}(z(t)) ,
\end{align}
\end{subequations}
where $I_{\mathcal{D}}$ and $I_{\mathcal{E}}$ are the indicator functions of $\mathcal{D}$ and $\mathcal{E}$, respectively, i.~e., they are zero in $\mathcal{D}$ and $\mathcal{E}$, respectively, and $+\infty$ elsewhere. In the game, $y(t)$ plays to minimize $F(\cdot,z(t))$, for given $z(t)$, whereas $y(t)$ plays to minimize $G(y(t),\cdot)$, for given $y(t)$. Thus, the goal of $y(t)$ and $z(t)$ is to remain in $\mathcal{D}$ and $\mathcal{E}$, respectively, while simultaneously minimizing their distance, as required.

Apart from being natural for infinite wave trains such as harmonic histories, the flat norm distance affords the great practical advantage of enabling the use of data at frequencies other that the applied frequencies of excitation, with an appropriate penalty accorded to the discrepancy between the two. This property is specially advantageous when the data is sampled at discrete frequencies, which are likely to differ from the applied frequency.

\begin{algorithm}[ht]
\caption{Data-driven solver, harmonic loading.} \label{9WfJmZ}
\begin{algorithmic}

\REQUIRE Dynamic viscoelasticity data set $\mathcal{P}$, $B$-matrix, member weights $W$. Applied amplitudes $F$ and $G$, applied frequency $\Omega$, cutoff frequency $\omega_{\rm cutoff}$.
\STATE i) {\bf Initialization}. Set $k=0$. Initial local data assignment:
\FORALL {$e=1,\dots,m$}
\STATE Choose $(\omega^{(0)}_e,\mathbb{E}^{(0)}_e)$ randomly from $\mathcal{P}$.
\ENDFOR
\STATE ii) {\bf Displacement problem}. Solve for $U^{(k)}$ such that
\begin{equation}
    B^T W \mathbb{E}^{(k)} (B U^{(k)} + G)- M \Omega^2 U^{(k)} = F ,
\end{equation}
\STATE iii) {\bf Local states}. Compute
\begin{equation}
    E^{(k)} = B U^{(k)} + G , \quad S^{(k)} = \mathbb{E}^{(k)} E^{(k)} .
\end{equation}
\STATE iv) {\bf Data assignment}.
\FORALL {$e=1,\dots,m$}
\STATE iv.a) Find $(\omega_i,\mathbb{E}_i)$ in $\mathcal{P}$ s.~t.~$d(Z^{(k)}_e \delta_\Omega,\,\{S=\mathbb{E}_i E\}\delta_{\omega_i})$ is minimal.
\STATE iv.b) Set $(\omega^{(k+1)}_e,\mathbb{E}^{(k+1)}_e)= (\omega_i,\mathbb{E}_i)$
\ENDFOR
\STATE v) {\bf Convergence test}.
\IF{$\mathbb{E}^{(k+1)}=\mathbb{E}^{(k)}$}
\STATE $U=U^{(k)}$, $E=E^{(k)}$, $S=S^{(k)}$, {\bf exit}.
\ELSE
\STATE $k \leftarrow k+1$, {\bf goto} (ii).
\ENDIF
\end{algorithmic}
\end{algorithm}

\section{Numerical examples} \label{p1PDI0}

In this section, we present two examples of application that illustrate the range and scope of the Data-Driven (DD) framework developed in the foregoing. We place particular focus on the convergence of the scheme with respect to data. Specifically, we consider a sequence $\mathcal{P}_h$ of data sets providing an increasingly better representation of the exact complex modulus $\mathbb{E}(\omega)$ and verify that, indeed, the DD solutions $(y_h(t)), z_h(t))$ converge to the exact solution $y(t)=z(t)$.

\subsection{Fixed-point iteration}

We solve the DD problem (\ref{eE5sIt}), or the equivalent game (\ref{8x8Jfh}), by means of a fixed-point iteration adapted from \cite{KO16}. We assume that the material is characterized by means of discrete data of the form $\mathcal{P} = \{(\omega_i,\mathbb{E}_i),\ i=1,\dots,N\}$. The solver is summarised in Algorithm~\ref{9WfJmZ}.

\subsection{3D printed polymeric structures}

\begin{figure}[ht]
\begin{center}
	\begin{subfigure}{0.445\textwidth}\caption{} \includegraphics[width=0.99\linewidth]{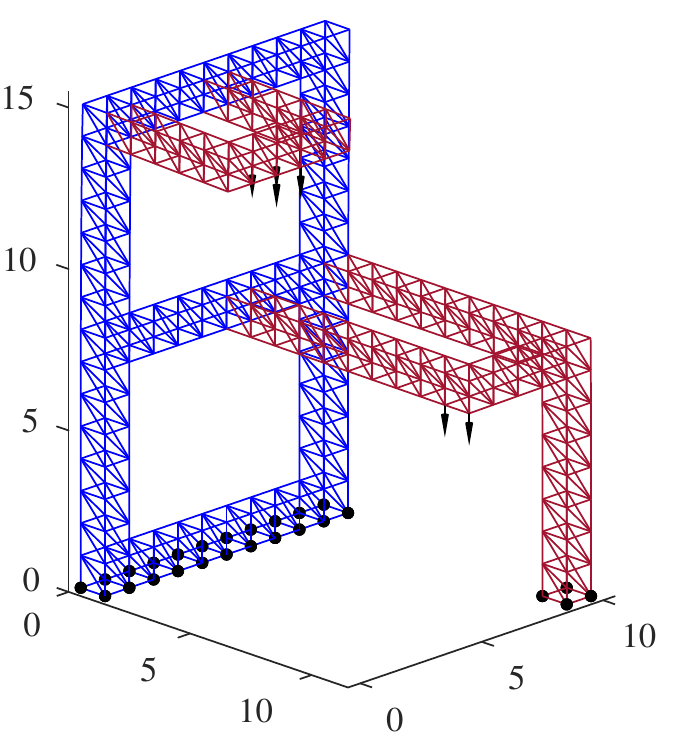}
	\end{subfigure}
	\begin{subfigure}{0.545\textwidth}\caption{} \includegraphics[width=0.99\linewidth]{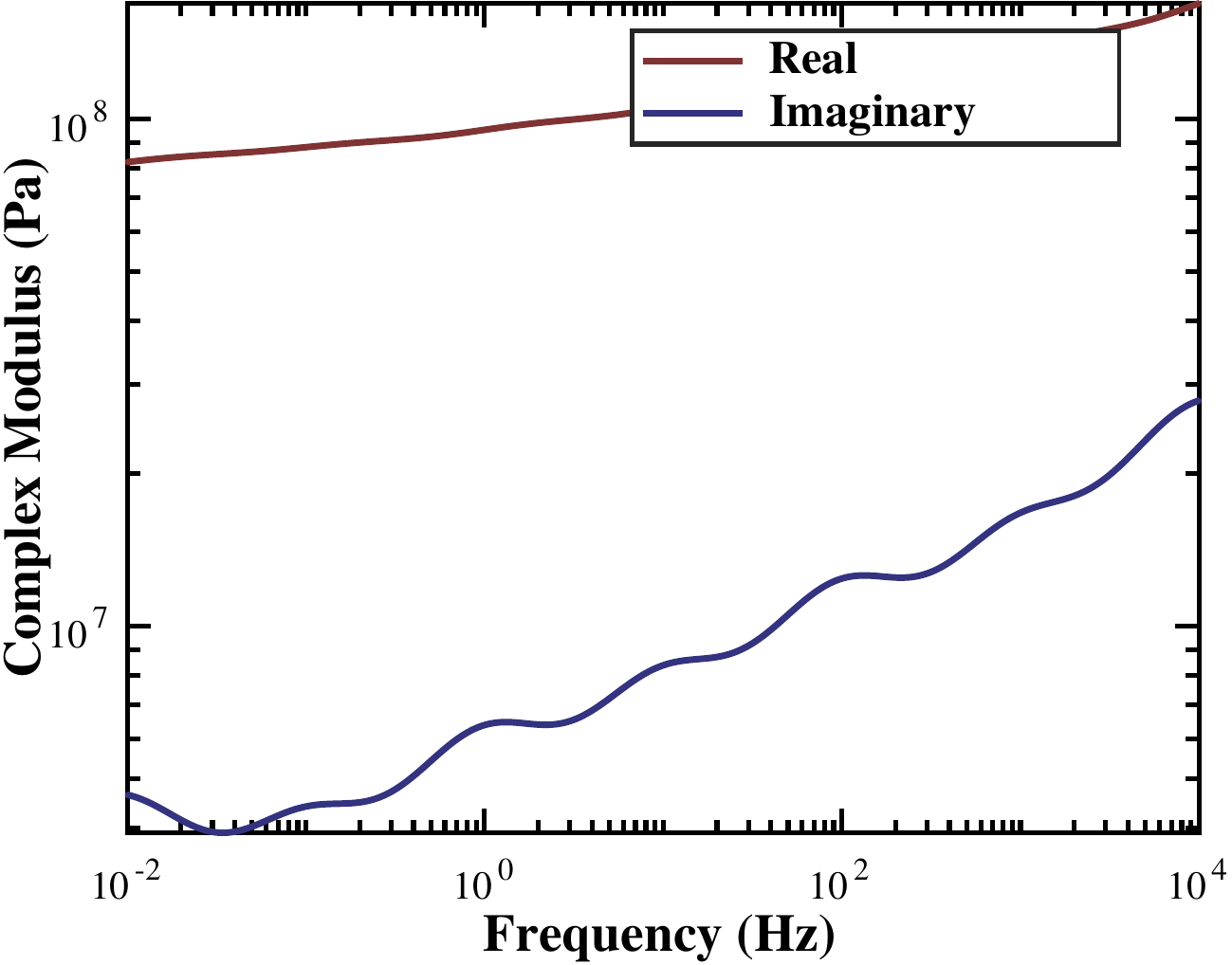}
	\end{subfigure}
    \caption{3D-printed polymeric truss. a) Truss geometry, supports (black dots) and prescribed harmonic displacements (arrows). b) Reference material model for polyurea ( adapted from \cite{knauss2007improved}).} \label{T05Xnh}
\end{center}
\end{figure}

\begin{figure}[ht]
\begin{center}
	\begin{subfigure}{0.45\textwidth}\caption{} \includegraphics[width=0.99\linewidth]{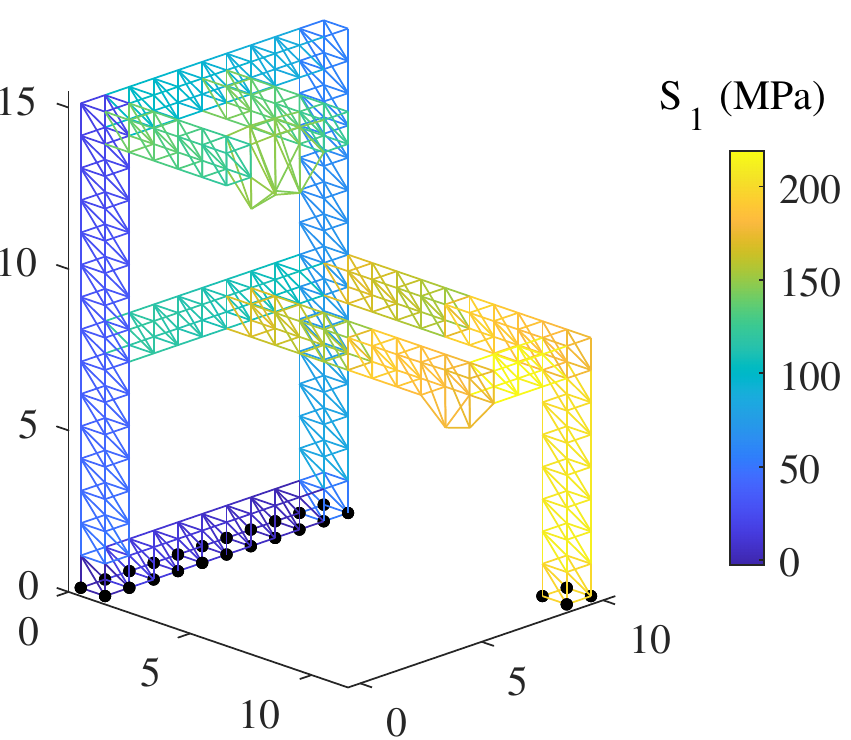}
	\end{subfigure}
$\quad$
	\begin{subfigure}{0.45\textwidth}\caption{} \includegraphics[width=0.99\linewidth]{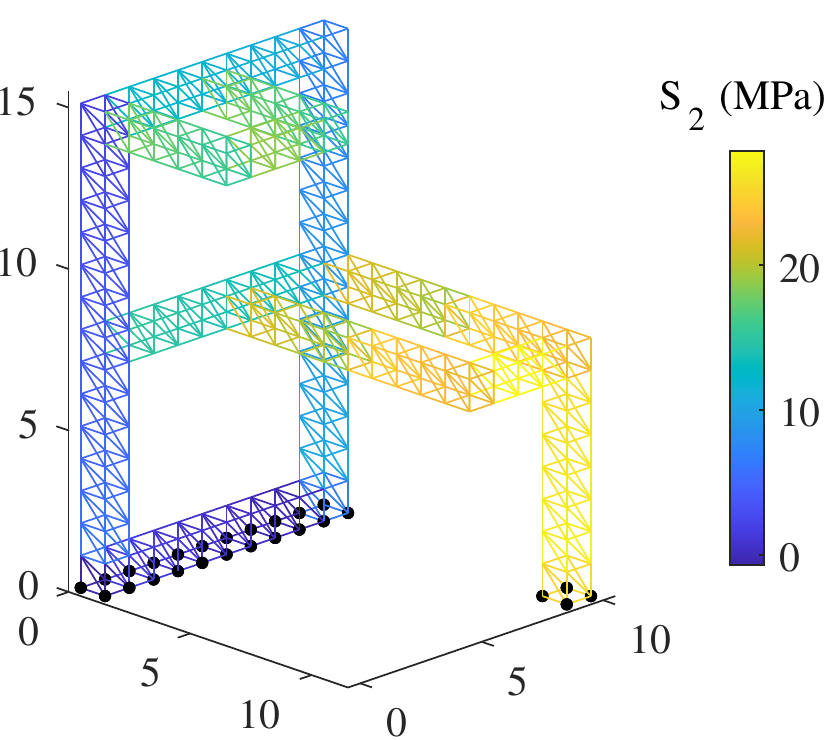}
	\end{subfigure}
    \caption{3D-printed polymeric truss. Displacements and stresses (color) for applied frequency $\Omega = 1000 \text{Hz}$. a) Real component of displacements and stresses; b) Imaginary component of displacements and stresses. } \label{x7mbG0}
\end{center}
\end{figure}

We envision a truss structure such as shown in Fig.~\ref{T05Xnh}a, comprised of 376 nodes and 1,246  elements made of polyurea. The truss is supported at the nodes marked by black dots in the figure. The displacements of a subset of the nodes, marked by arrows in the figure, undergo forced harmonic excitation at frequency $\Omega = 1000 \text{Hz}$. The material is modelled as a linear viscoelastic material and is assumed to be exactly characterized by the Prony series measures experimentally by Knauss {\sl et al.} \cite{knauss2007improved}, Fig.~\ref{T05Xnh}b. The reference solution for those data is shown in Fig.~\ref{x7mbG0}.

\begin{figure}[ht]
\begin{center}
	\begin{subfigure}{0.45\textwidth}\caption{} \includegraphics[width=0.99\linewidth]{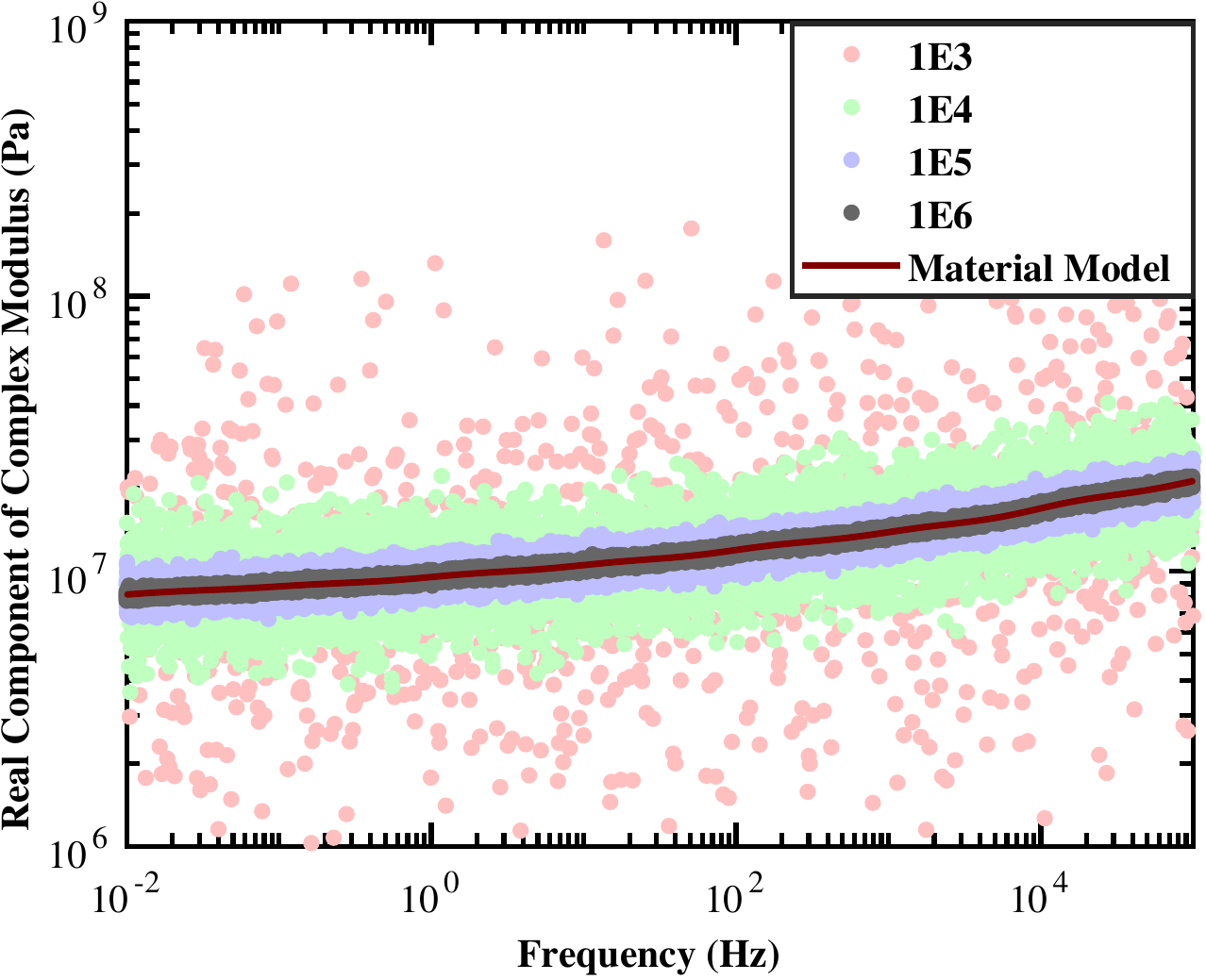}
	\end{subfigure}
$\quad$
	\begin{subfigure}{0.45\textwidth}\caption{} \includegraphics[width=0.99\linewidth]{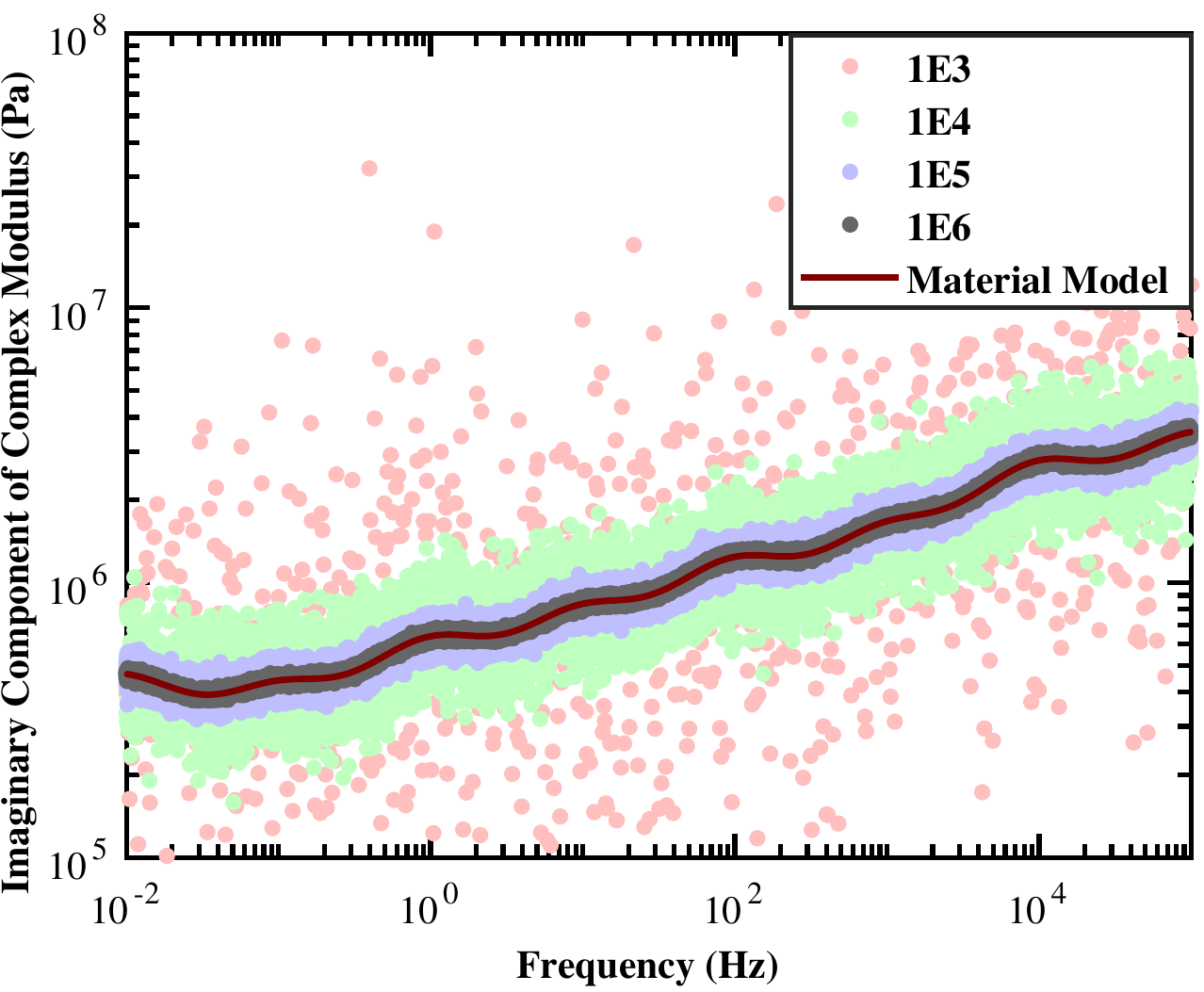}
	\end{subfigure}
    \caption{3D-printed polymeric truss. Point data sets $\mathcal{P}_h$ of sizes $10^3, 10^4, 10^5$ and $10^6$ used in the DD calculations. a) Real component of complex modulus. b) Imaginary component of complex modulus. } \label{y6WBci}
\end{center}
\end{figure}

Suppose that the exact characterization of the material, assumed to be as shown in Fig.~\ref{T05Xnh}b, is not known, but, instead, the material is characterized by a sequence of point data sets $\mathcal{P}_h$ that sample the exact material behavior with increasing fidelity. In practice, the data could be the result of increasingly extensive successive experimental campaigns, or could be the result of diverse experimental campaigns of different degrees of reliability. The noise in the data could be due to experimental measurement error, variability of the material samples, or combinations thereof. The specific data sets used in calculations are shown in Fig.~\ref{y6WBci}. The data sets are synthetically generated by adding random noise to the true behavior of Fig.~\ref{T05Xnh}b. A steady increase in fidelity as achieved by: i) increasing the size of the sample, and ii) simultaneously decreasing the standard deviation of the noise. This type of data convergence has been termed {\sl uniform} in \cite{KO16, Conti:2018}. The data-set sizes used in calculations are $10^3, 10^4, 10^5$ and $10^6$.

\begin{figure}
    \centering
    \includegraphics[scale=0.65]{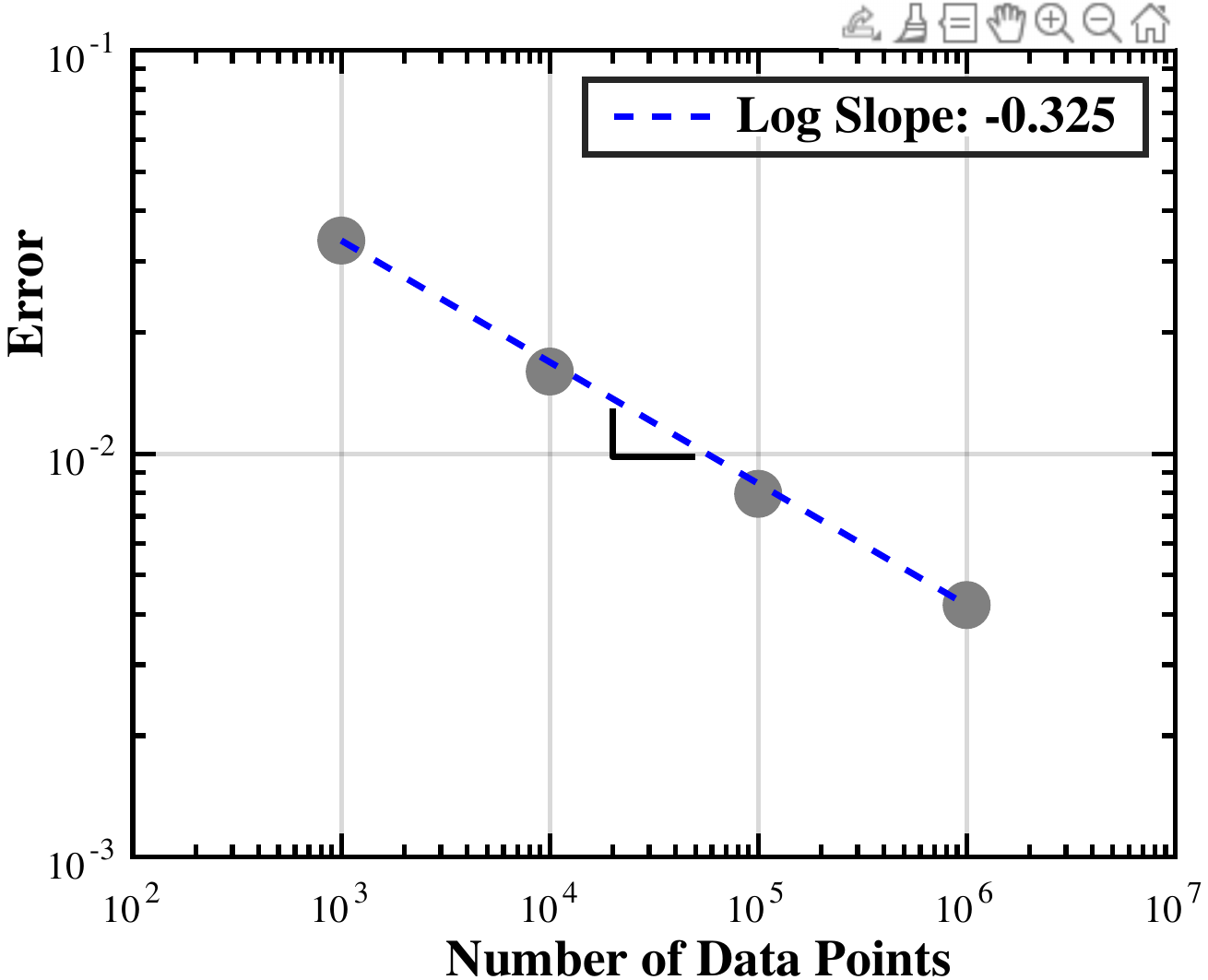}
    \caption{3D-printed polymeric truss. Normalized flat-norm convergence error of the DD solutions relative to the exact reference solution as a function of the number of data points, showing a clear trend towards convergence. }
    \label{2ZG9yT}
\end{figure}

The truss problem is solved for each of the point data sets $\mathcal{P}_h$ using the fixed-point iteration solver of Algorithm~\ref{9WfJmZ}. The fast convergence of the iterative solver, which is attained in ten or fewer iterations, is remarkable given the combinatorial complexity of the data assignment operation and renders the calculations extremely fast. The normalized flat-norm error in the DD solutions, relative to the exact reference solution, is shown in Fig.~\ref{2ZG9yT} as a function of the point-data size. As is evident from the figure, the DD solutions exhibit a clear trend towards convergence at a rate of $\sim 1/3$, which illustrates the convergence property of the general scheme.

\subsection{Insonated gel block}

\begin{figure}[ht]
\begin{center}
	\begin{subfigure}{0.445\textwidth}\caption{} \includegraphics[width=0.99\linewidth]{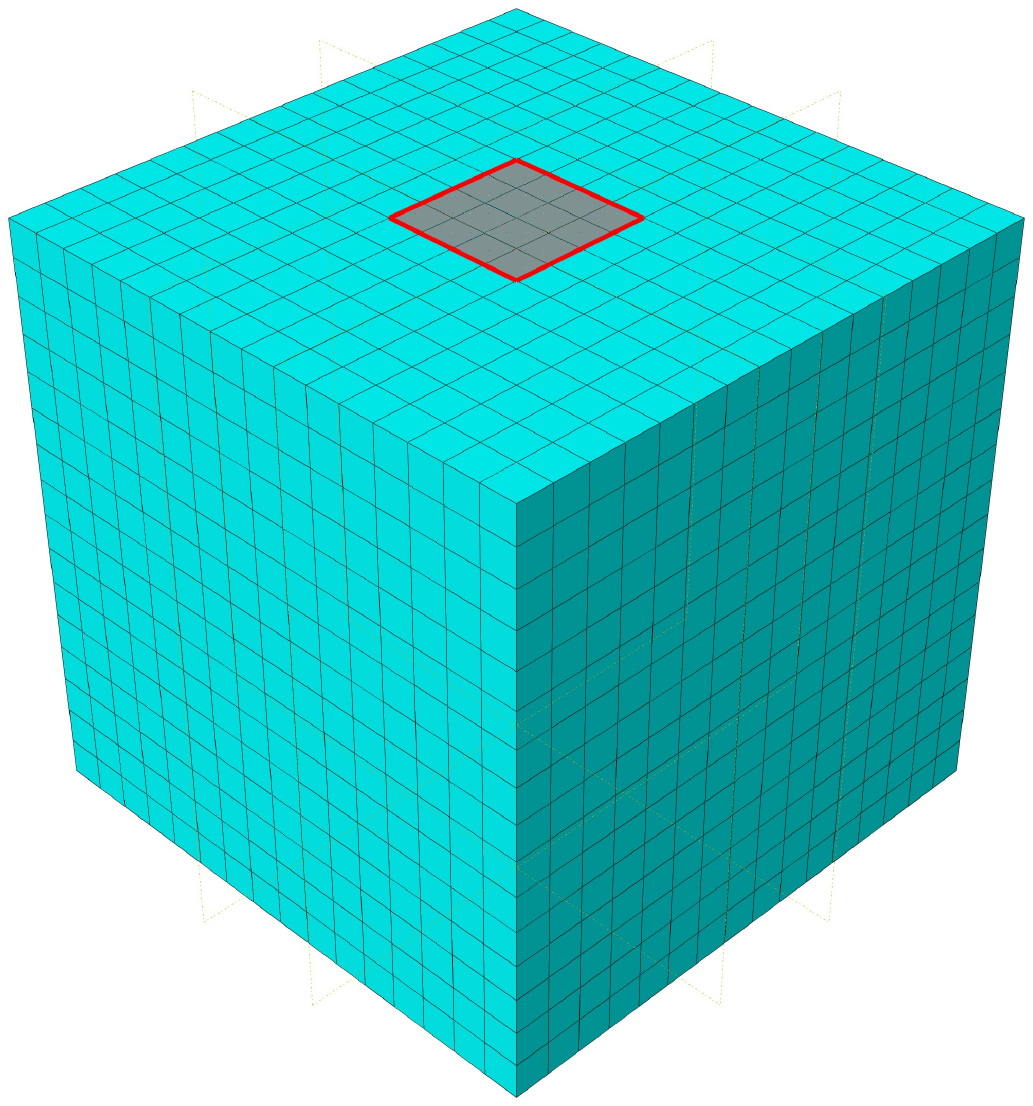}
	\end{subfigure}
	\begin{subfigure}{0.545\textwidth}\caption{} \includegraphics[width=0.99\linewidth]{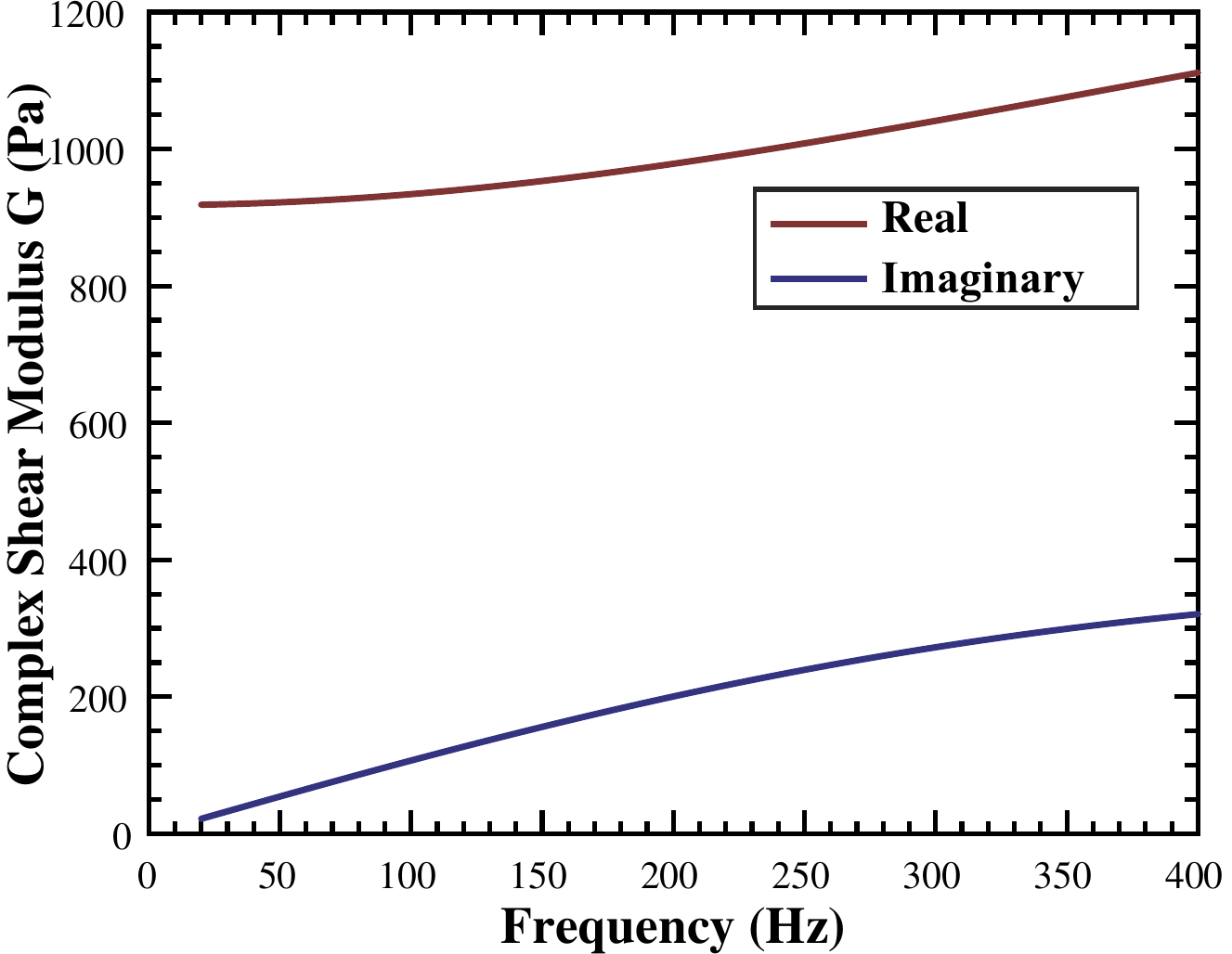}
	\end{subfigure}
    \caption{Insonated gel block. a) Finite-element discretization of gel block into hexahedral elements. The square on the top surface marks the area of application of the ultrasound pressure. b) Complex moduli of agarose gel measured using dynamic shear testing (DST) and magnetic resonance elastography (MRE) (adapted from \cite{okamoto2011viscoelastic}).} \label{TytA4e}
\end{center}
\end{figure}

\begin{figure}[ht]
\begin{center}
	\begin{subfigure}{0.45\textwidth}\caption{} \includegraphics[width=0.99\linewidth]{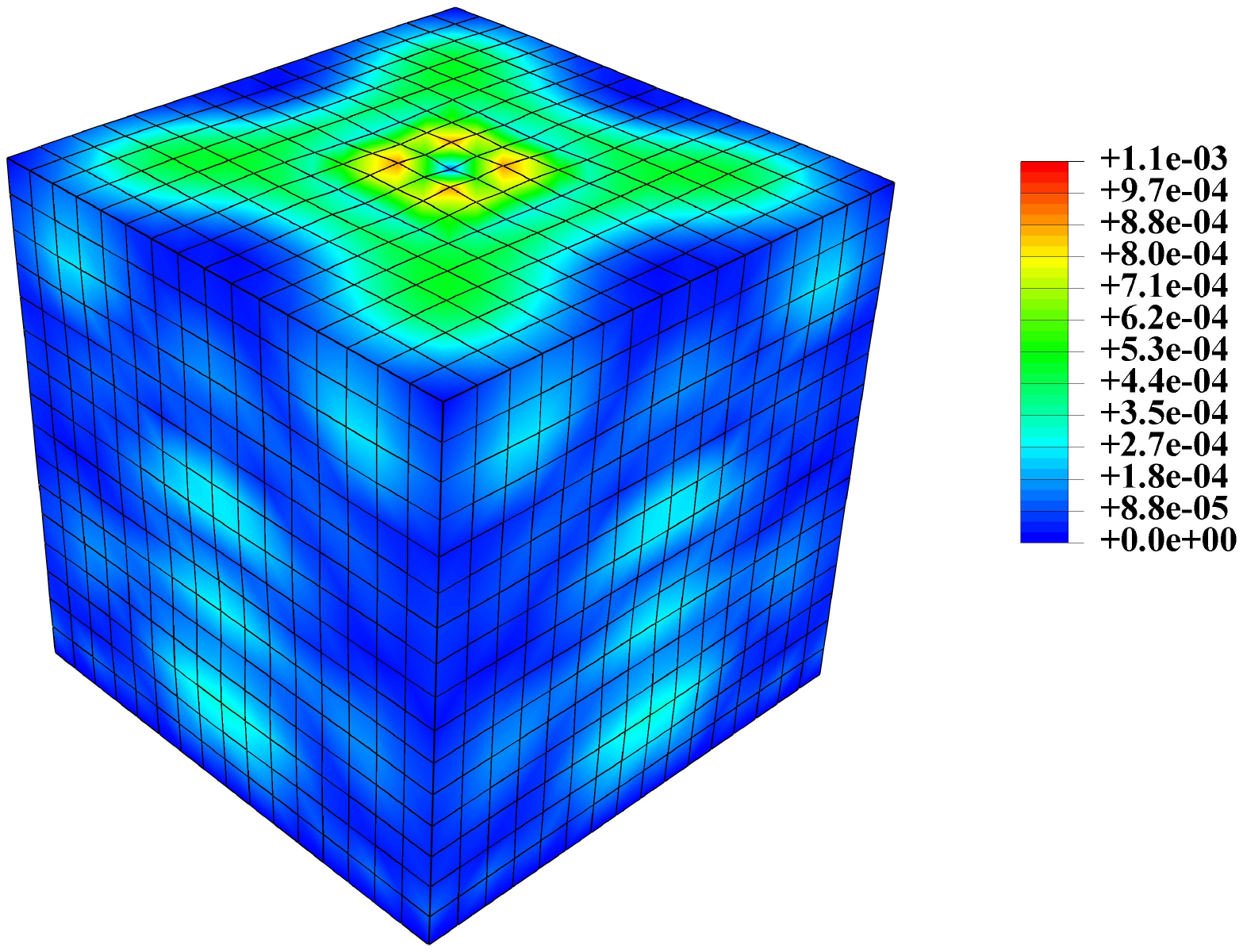}
	\end{subfigure}
$\quad$
	\begin{subfigure}{0.45\textwidth}\caption{} \includegraphics[width=0.99\linewidth]{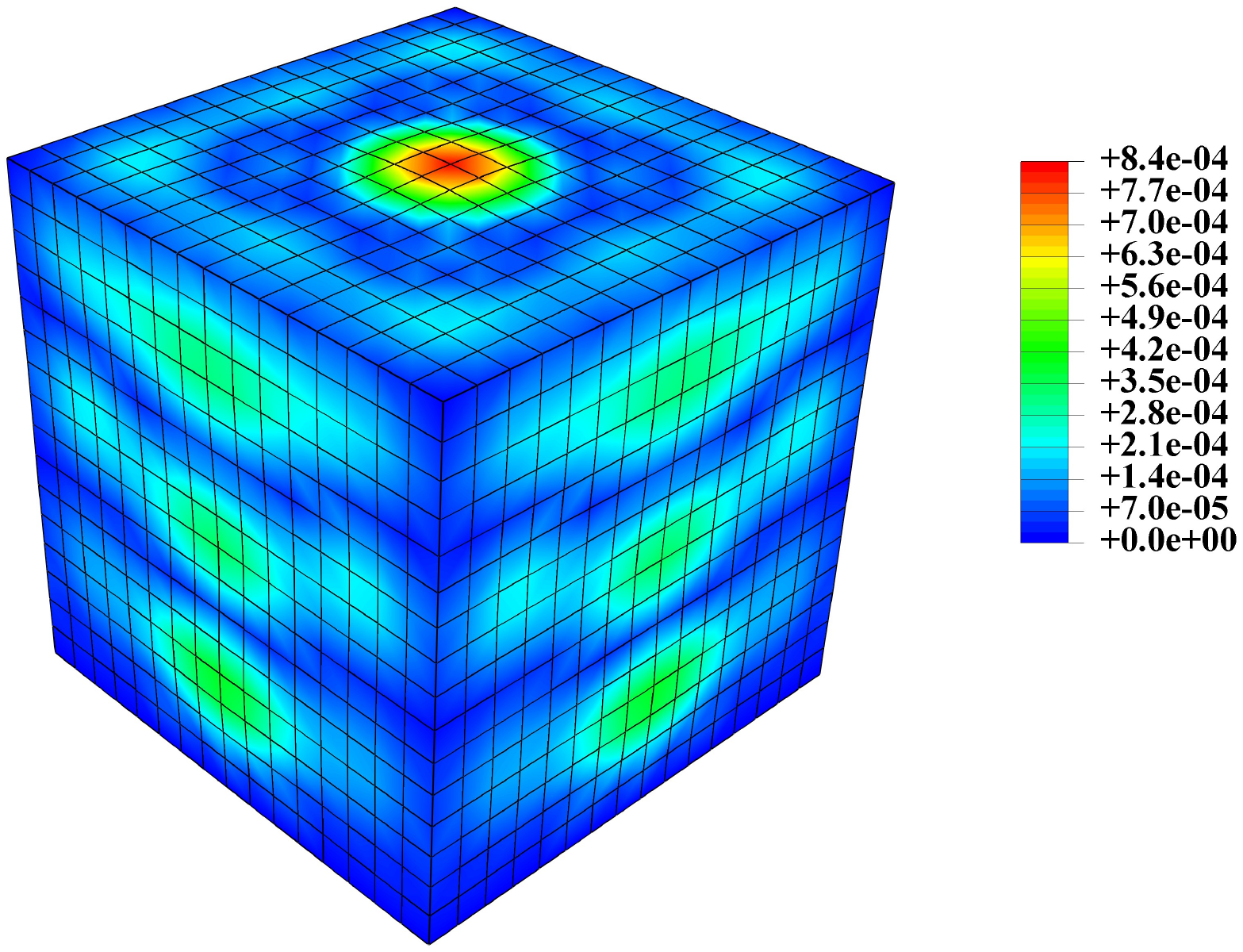}
	\end{subfigure}
    \caption{Insonated gel block. Displacements and stresses (color) for applied frequency $\Omega = 1000 \text{Hz}$. a) Real component of displacements and stresses; b) Imaginary component of displacements and stresses. } \label{sRu8Br}
\end{center}
\end{figure}

We demonstrate the ease of implementation of the DD approach in conventional finite-element codes by performing DD finite-element calculations using the commercial code Abaqus (Abaqus/Standard, Dassault Systemes Simulia, France). The problem concerns the response of a sample of agarose gel in the shape of a block under the action of sonic excitation on its boundary. Agarose gel is often used as a phantom material for soft biological tissue \cite{Rabut:2020}. The analysis is structured as that of the preceding section.

Fig.~\ref{TytA4e}a shows the finite element model of a block 8 cm in size. The model is comprised of 4,913 nodes and 4,096 8-node hexahedral elements. A $4$ ${\rm cm}^2$ square on the top surface is subject to sonic pressure of 1 kPa amplitude and frequency $\Omega = 100 Hz$, while the bottom surface of the block is fully constrained. The material is assumed to be linear viscoelastic and to be characterized by experimental data obtained using dynamic shear testing (DST) and magnetic resonance elastography (MRE) adapted from \cite{okamoto2011viscoelastic}, Fig.~\ref{TytA4e}b. The solution for the assumed exact material characterization is shown in Fig.~\ref{sRu8Br}.

\begin{figure}[ht]
\begin{center}
	\begin{subfigure}{0.45\textwidth}\caption{} \includegraphics[width=0.99\linewidth]{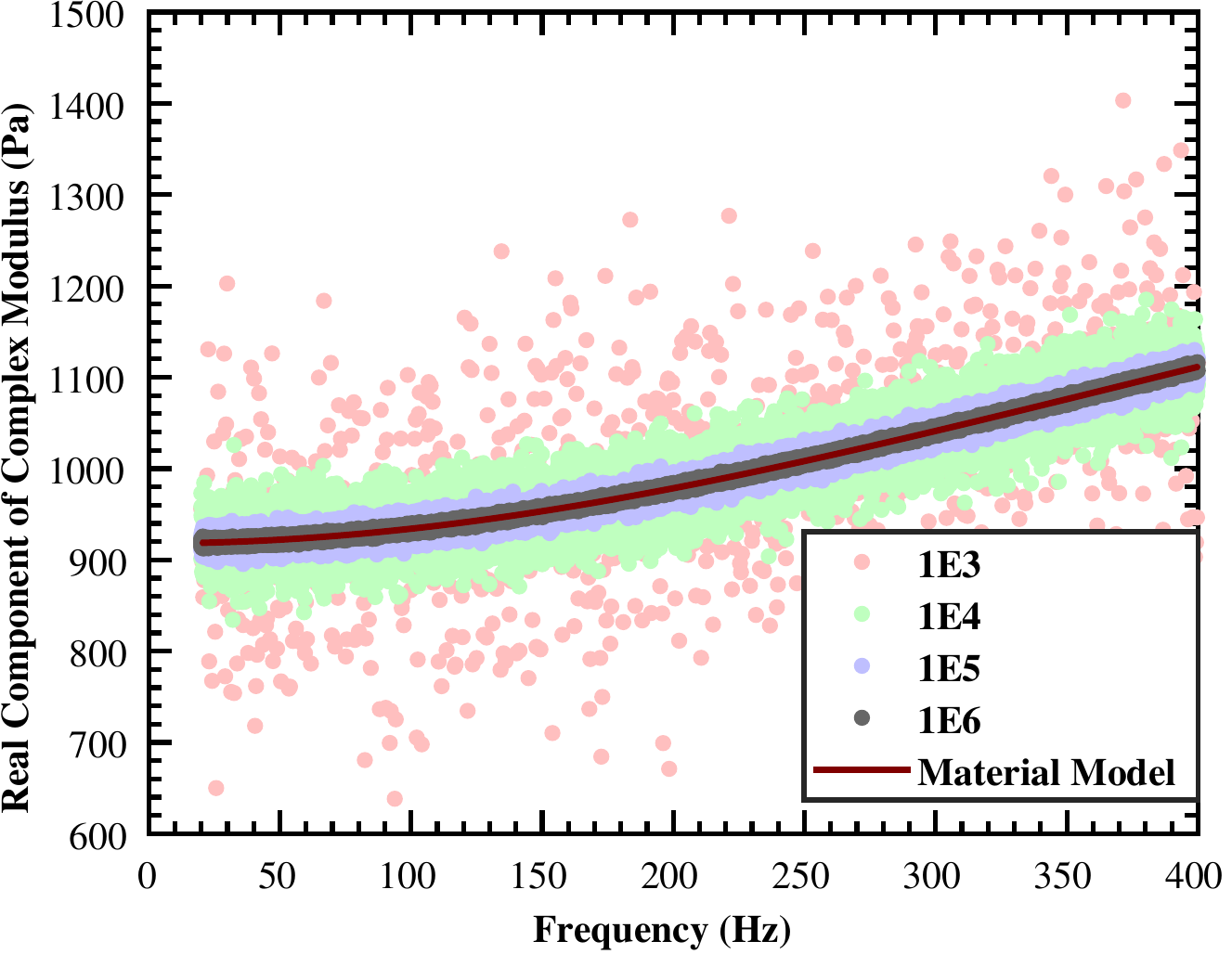}
	\end{subfigure}
$\quad$
	\begin{subfigure}{0.45\textwidth}\caption{} \includegraphics[width=0.99\linewidth]{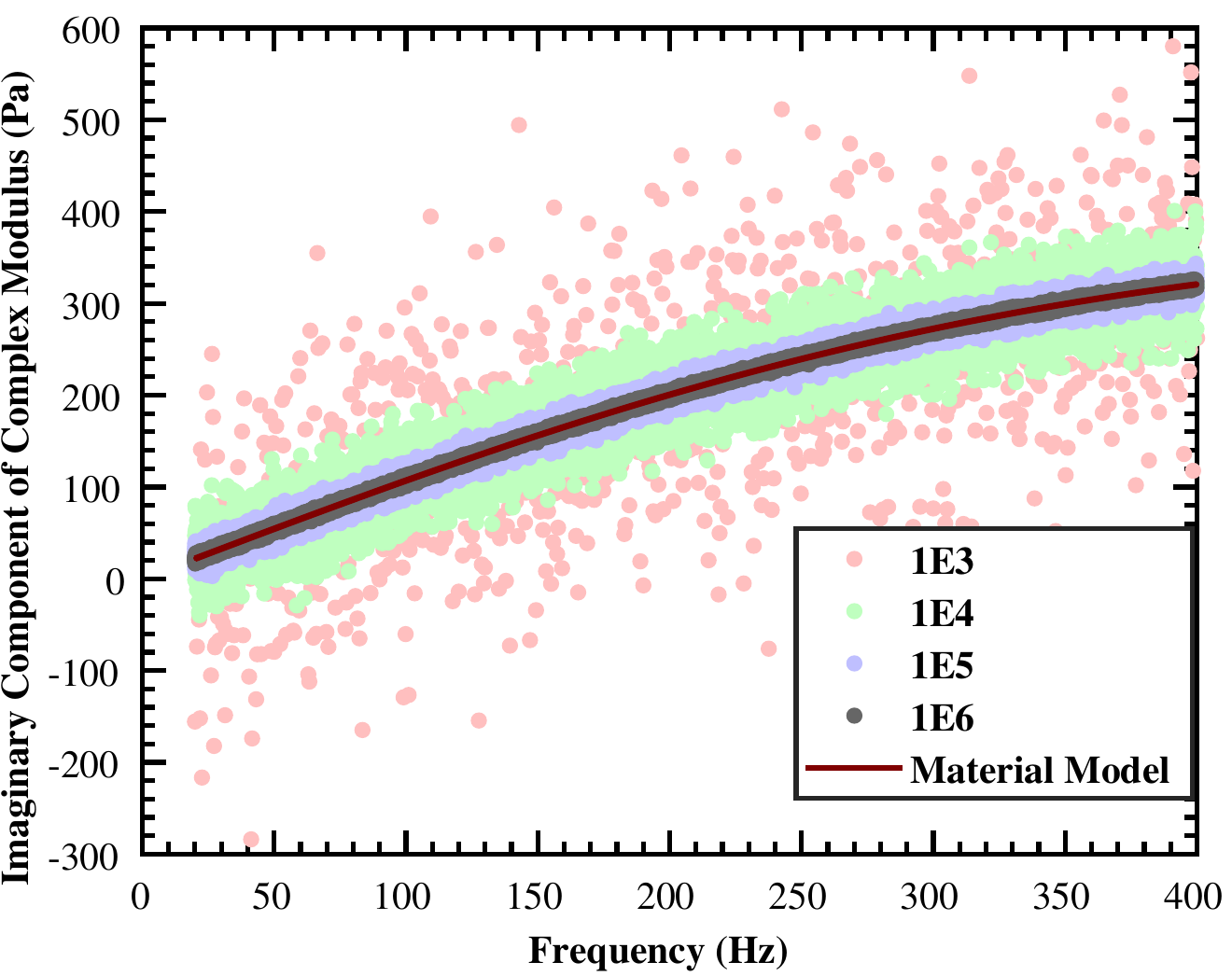}
	\end{subfigure}
    \caption{Insonated gel block. Point data sets $\mathcal{P}_h$ of sizes $10^3, 10^4, 10^5$ and $10^6$ used in the DD calculations. a) Real component of complex modulus. b) Imaginary component of complex modulus. } \label{i7bTVz}
\end{center}
\end{figure}

As in the preceding section, we assume next that the exact characterization of the material is not known exactly, but only through a sequence of point-data sets $\mathcal{P}_h$ of increasing fidelity, Fig.~\ref{i7bTVz}. The data is generated synthetically from the exact characterization by the addition of random noise, representing experimental scatter, measuring error, sample-to-sample variability, multiple provenances, or other sources of error. The data sets contain $10^3, 10^4, 10^5$ and$, 10^6$ points, respectively, and, simultaneously, exhibit decreasing amounts of scatter. In order to assess convergence, the normalized flat-norm error of the DD solutions relative to the limiting reference solution is plotted against the point-data set size, Fig.~\ref{TVd99l}. The robust trend to convergence of the DD solutions, with convergence rate $\sim 1/2$, is evident from the figure. Again remarkably, the fixed-point iteration solver converges in ten iterations or less, which renders the calculations exceedingly fast.

\begin{figure}
    \centering
    \includegraphics[scale=0.65]{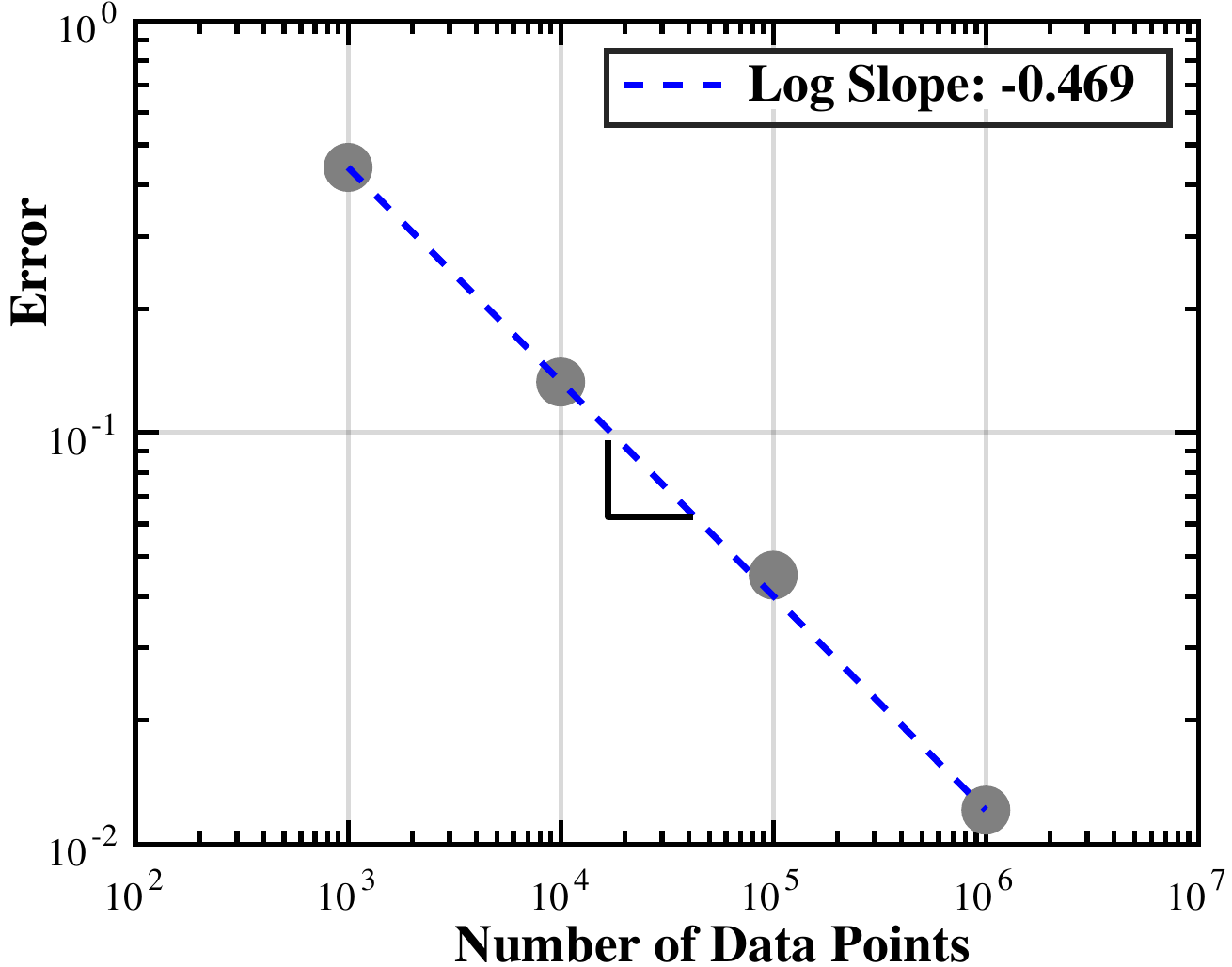}
    \caption{Insonated gel block. Normalized flat-norm convergence error of the DD solutions relative to the exact reference solution as a function of the number of data points, showing a clear trend towards convergence. }
    \label{TVd99l}
\end{figure}

\section{Concluding remarks}\label{Br1zak}

The present work showcases one particular way in which history data can be measured, codified and then used in model-free Data-Driven (DD) calculations, namely, through the use of the Fourier transform. This representational paradigm for histories is particularly well-suited in connection with long-term solid and structural dynamics where the histories of interest often take the form of 'infinite wave trains', a harmonic function being the simplest case in point \cite{MO04}. Other representational paradigms for histories better suited for short-term behavior of materials with memory, such as viscoplastic materials, have been treated elsewhere \cite{Eggers:2019}. It is therefore convenient that a number of experimental techniques supply data pertaining to the Fourier representation, including Dynamic Mechanical Analysis (DMA), nano-indentation, Dynamic Shear Testing (DST) and Magnetic Resonance Elastography (MRE), among others.

In the context of DD methods, in particular, it becomes necessary to measure distances between trial histories, satisfying compatibility and equilibrium, and material histories codified in material-data repositories. The Fourier transforms of such 'infinite histories' are localized measures in Fourier space, a Dirac-delta being the simplest case in point. There are a number of natural ways to measure distance between such measures, most of them variants of the Wasserstein distance from optimal transport \cite{CV08}, which quantify both the difference in amplitude and in the 'center of mass' of the measures. We have chosen to measure distance in Fourier space by means of the flat norm and shown that it lends itself to a straightforward implementation while providing the required control over the approximating solutions. Another advantage of the flat norm in Fourier space is that it allows data at different frequencies to be 'mixed and matched'. This feature is particularly useful when the data is sparse and specific frequencies are sampled only sparingly.

Finally, we emphasize that all DD calculations presented in this work have been directly predicated on  material data without any recourse to material modeling of any type at any stage of the calculations. The ability to make quantitative---and convergent---predictions directly from material data is remarkable and sets forth a new model-free paradigm in computational mechanics.

\section*{Acknowledgements}

Support for this work from the U.~S.~National Institutes of Health through Grant No. 1RF1MH117080 is gratefully acknowledged. MO is also grateful for support from the Deutsche Forschungsgemeinschaft (DFG, German Research Foundation) {\sl via} project 211504053 - SFB 1060; project 441211072 - SPP 2256; and project 390685813 -  GZ 2047/1 - HCM.

\begin{appendix}

\section{The flat norm}\label{af7qTx}

We recall that the flat norm of a vector-valued Radon measure $\mu$ over $\mathbb{R}$ is (cf., e.~g., \cite{CD21}),
\begin{equation}\label{23vzTp}
\begin{split}
    &
    \| \mu \|
    = \\ &
    \sup
    \Big\{
        | \int_{\mbb{R}} f \cdot d\mu | \, : \,
        f:\mbb{R}\to \mathbb{C}^n \ {\rm Lipschitz},
        \ {\rm Lip} \, f \leq \lambda,\ \sup \|f\| \leq 1
    \Big\} ,
\end{split}
\end{equation}
for some cutoff $\lambda > 0$. From the definition, it follows that (cf.~\cite{CD21}, Remark 1.34),
\begin{equation}
    \| A \delta_\alpha - B \delta_\beta \|
    =
    \min\{\|A\|,\|B\|\} \, \min\{2,\lambda |\alpha-\beta|\} + \|A - B\| ,
\end{equation}
where $A, B \in \mathbb{C}^n$, $A\neq 0$, $B\neq 0$, $A-B\neq 0$, $\alpha, \beta \in \mathbb{R}$.

Consider now the following problem. For a given vector-valued Dirac measure $B\delta_\beta$ over $\mathbb{R}$,  $B \in \mathbb{C}^n$, $\beta \in \mathbb{R}$, we wish to find the distance to a linear subspace of measures
\begin{equation}
    \mathcal{S}
    =
    \{ A \delta_\alpha \, : \, A \in S \}
    =
    S \delta_\alpha,
\end{equation}
where $\alpha \in \mathbb{R}$ is given and $S$ is a proper subspace of $\mathbb{C}^n$. Thus, we wish to find
\begin{equation}
    d(B\delta_\beta, \mathcal{S})
    =
    \min
    \{
        \|A\delta_\alpha - B\delta_\beta\| \,:\, A \in S
    \} .
\end{equation}
We shall assume that the norm $\|\cdot\|$ over $\mathbb{C}^n$ derives from a Hermitian inner product. Therefore, every vector $B \in \mathbb{C}^n$ admits an orthogonal decomposition of the form
\begin{equation}\label{6NJwsx}
    B = B_\parallel + B_\perp ,
\end{equation}
where $B_\parallel \in S$ and $B_\perp$ in the orthogonal complement $S_\perp$.

We state the result as a proposition for ease of reference.

\begin{prop}\label{iWC5ll}
Let $B \in \mathbb{C}^n$, $\alpha, \beta \in \mathbb{R}$. Let $S$ be a proper subspace of $\mathbb{C}^n$ and $\mathcal{S} = \{ A \delta_\alpha \, : \, A \in S \}$. Suppose that $\mathbb{C}^n$ is metrized by a norm $\|\cdot\|$ derived from a Hermitian inner product. Then,
\begin{equation}\label{evk9G5}
    d(B\delta_\beta, \mathcal{S})
    =
    \min
    \{
        \|(\mu-1)B_\parallel-B_\perp\| + c \, \mu \, \|B_\parallel\| ,
        \ \| B \|
    \} ,
\end{equation}
with
\begin{equation}\label{70Shtr}
    \mu
    =
    1
    -
    \frac{c }{\sqrt{1-c^2}}
    \frac{\|B_\perp\|}{\|B_\parallel\|} ,
    \quad
\end{equation}
and
\begin{equation}\label{bAF0Ib}
    c = \min\{2,\lambda |\alpha-\beta|\} .
\end{equation}
\end{prop}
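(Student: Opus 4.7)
The plan is to combine the closed-form expression for the flat-norm distance between two Dirac measures, already recorded in the unnumbered display of the appendix, with an explicit one-parameter minimization exploiting the orthogonal decomposition of $B$.

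First, I would use the cited identity to recast the problem as
\[
d(B\delta_\beta,\mathcal{S}) = \inf_{A\in S}\bigl[\, c\min\{\|A\|,\|B\|\} + \|A-B\|\,\bigr],
\]
with $c$ as in (\ref{bAF0Ib}). Using the decomposition (\ref{6NJwsx}), I parametrize $A = \mu B_\parallel + A'$ with $\mu \in \mathbb{C}$ and $A'\in S$ Hermitian-orthogonal to $B_\parallel$. Since $A' \perp B_\perp$ as well (because $A' \in S$ and $B_\perp \in S_\perp$), Pythagorean expansion yields
\[
\|A\|^2 = |\mu|^2\|B_\parallel\|^2 + \|A'\|^2, \qquad \|A-B\|^2 = |\mu-1|^2\|B_\parallel\|^2 + \|B_\perp\|^2 + \|A'\|^2.
\]
Both are non-decreasing in $\|A'\|$, so the optimum has $A'=0$. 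Writing $\mu = \rho e^{i\theta}$ and noting that $|\mu-1|^2 = \rho^2 - 2\rho\cos\theta + 1$ is minimized at $\theta=0$, I further restrict to $\mu = \rho \geq 0$.

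The problem then reduces to the one-dimensional minimization
\[
\Phi(\rho) = c\min\{\rho\|B_\parallel\|,\|B\|\} + \sqrt{(\rho-1)^2\|B_\parallel\|^2 + \|B_\perp\|^2}, \qquad \rho\geq 0.
\]
On the regime $\rho\|B_\parallel\|\leq \|B\|$, $\Phi$ is smooth and convex. Setting $\Phi'(\rho)=0$ and squaring gives $(1-\rho)^2\|B_\parallel\|^2(1-c^2) = c^2\|B_\perp\|^2$, which rearranges (with the sign fixed by $1-\rho\geq 0$) to the critical-point formula (\ref{70Shtr}) for $\mu$. Evaluating $\Phi$ at this stationary value and using $\|(\mu-1)B_\parallel - B_\perp\|^2 = (\mu-1)^2\|B_\parallel\|^2 + \|B_\perp\|^2$ (from $B_\parallel \perp B_\perp$) together with $c\mu\|B_\parallel\| = c\|A\|$ identifies $\Phi(\mu)$ with the first argument of the min in (\ref{evk9G5}). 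On the complementary regime $\rho\|B_\parallel\|\geq \|B\|$, $\Phi$ becomes $c\|B\| + \|A-B\|$, which is minimized at $\rho = \|B\|/\|B_\parallel\|$: the two pieces of $\Phi$ match there by continuity, so this case never improves on the interior optimum.

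The remaining candidate is the endpoint $\rho=0$, i.e.\ $A=0$, which contributes $\Phi(0) = \|B\|$. Taking the smaller of this and the interior stationary value produces (\ref{evk9G5}). The main obstacle I anticipate is the degenerate regime in which the formal $\mu$ of (\ref{70Shtr}) falls outside the admissible interval $[0,\|B\|/\|B_\parallel\|]$ (for instance when $c$ is close to or exceeds $1$, or when $B_\parallel=0$): in that situation the constrained minimum reverts to the endpoint $\rho=0$ and equals $\|B\|$, which is precisely the role of $\min\{\,\cdot\,,\|B\|\}$ in (\ref{evk9G5}). A careful sign and boundary analysis, together with the Cauchy--Schwarz bound $c\|B_\parallel\| + \sqrt{1-c^2}\|B_\perp\| \leq \|B\|$, is what reconciles the two arguments of the min and closes the proof.
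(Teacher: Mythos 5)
Your proposal is correct and follows essentially the same route as the paper's proof: the closed-form flat norm of a difference of Dirac measures, the orthogonal decomposition $B = B_\parallel + B_\perp$, reduction of the minimizer to $A = \mu B_\parallel$ with $\mu \geq 0$, a one-dimensional stationarity computation yielding (\ref{70Shtr}), and the fallback $A=0$ giving the second argument $\|B\|$ of the min. Your treatment is in fact slightly more careful at two points where the paper is informal --- you justify $A = \mu B_\parallel$ by a Pythagorean monotonicity and phase argument rather than by differentiating with respect to the complex vector $A$, and you handle both regimes of $\min\{\|A\|,\|B\|\}$ explicitly instead of assuming $\|A\|\leq\|B\|$ and verifying a posteriori --- but these are refinements of the same argument, not a different one.
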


\begin{proof}
Suppose that the distance-minimizing vector $A$ is such that
\begin{equation}\label{K9Oi0O}
    \|A \| \leq \|B\|
\end{equation}
and, therefore,
\begin{equation}
    \| A \delta_\alpha - B \delta_\beta \|
    =
    \|A-B\| + c \, \|A\| ,
\end{equation}
with $c$ as in (\ref{bAF0Ib}). Inserting the orthogonal decomposition (\ref{6NJwsx}), we obtain
\begin{equation}
    \| A \delta_\alpha - B \delta_\beta \|
    =
    \|A-B_\parallel-B_\perp\| + c \, \|A\| .
\end{equation}
By orthogonality, with $A \in S$,
\begin{equation}\label{zbEj6t}
    \| A \delta_\alpha - B \delta_\beta \|
    =
    \Big(\|A-B_\parallel\|^2+\|B_\perp\|^2\Big)^{1/2} + c \, \|A\| .
\end{equation}
Differentiating with respect to $A$, we find
\begin{equation}
    \frac{A-B_\parallel}{\Big(\|A-B_\parallel\|^2+\|B_\perp\|^2\Big)^{1/2}}
    +
    c \,
    \frac{A}{\|A\|}
    =
    0 ,
\end{equation}
which shows that
\begin{equation}
    A = \mu B_\parallel ,
\end{equation}
for some $\mu \geq 0$. Inserting this representation into (\ref{zbEj6t}) gives
\begin{equation}
    \| A \delta_\alpha - B \delta_\beta \|
    =
    \|(\mu-1)B_\parallel-B_\perp\| + c \, \mu \, \|B_\parallel\| ,
\end{equation}
and differentiating with respect to $\mu$, we obtain
\begin{equation}
    \frac{((\mu-1)B_\parallel-B_\perp)\cdot B_\parallel}{\|(\mu-1)B_\parallel-B_\perp\|}
    +
    c \, \|B_\parallel\|
    =
    0,
\end{equation}
Solving fo $\mu$ gives (\ref{70Shtr}), whence (\ref{evk9G5}) follows. We verify that
\begin{equation}
    \|A\|
    =
    \mu \|B_\parallel\|
    =
    \Big(
        1
        -
        \frac{c }{\sqrt{1-c^2}}
        \frac{\|B_\perp\|}{\|B_\parallel\|}
    \Big)
    \|B_\parallel\|
    \leq
    \|B_\parallel\|
    \leq
    \| B \| ,
\end{equation}
which bears out the assumption (\ref{K9Oi0O}). The condition
\begin{equation}
    \mu
    \geq
    1
    -
    \frac{c }{\sqrt{1-c^2}}
    \frac{\|B_\perp\|}{\|B_\parallel\|}
    \geq
    0
\end{equation}
additionally requires that
\begin{equation}
    c
    \leq
    \frac{\|B_\parallel\|}{\|B\|} .
\end{equation}
Otherwise, the minimizer is $A=0$ and the corresponding distance
\begin{equation}
    \| A \delta_\alpha - B \delta_\beta \|
    =
    \|B\| ,
\end{equation}
which completes (\ref{evk9G5}).
\end{proof}
\end{appendix}

\bibliographystyle{elsarticle-num}
\bibliography{DDVisco}

\end{document}